\pgfplotsset{compat=1.9}
\pgfplotsset{compat=newest}  %
\pgfplotsset{compat=1.15}
\newcolumntype{K}[1]{>{\centering\arraybackslash}p{#1}}
\newcommand{\hilitediff}[1]{#1}
\newcommand{\multiline}[1]{%
  \begin{tabularx}{\dimexpr\linewidth-\ALG@thistlm}[t]{@{}X@{}}
    #1
  \end{tabularx}
}
\newtheorem{theorem}{Theorem}
\newtheorem{lemma}[theorem]{Lemma}
\newtheorem{assumption}{Assumption}
\newcounter{remark}
\newenvironment{remark}{%
\par\vspace{3pt}\noindent\refstepcounter{remark}\textbf{Remark~\theremark:}}%
{\par\endtrivlist\unskip}
\newtheorem{definition}{Definition}
\newcounter{problem}
{\par\endtrivlist\unskip}
\newcommand{\probP}{\text{I\kern-0.15em P}}
\newcommand{\CAV}[1]{CAV\textendash\ensuremath{(#1)}\xspace}
\newcommand{\veh}[1]{vehicle\textendash\ensuremath{(#1)}\xspace}
\newcommand{\lane}[1]{lane\textendash\ensuremath{#1}\xspace}
\newcommand{\TLC}[1]{TLC\textendash\ensuremath{#1}\xspace}
\newcommand{\node}[1]{node\textendash\ensuremath{#1}\xspace}
\newcommand{\agent}[1]{agent\textendash\ensuremath{#1}\xspace}
\newcommand{\Agent}[1]{Agent\textendash\ensuremath{#1}\xspace}
\newcommand{\bbsym}[1]{\ensuremath{\boldsymbol{#1}}}
\def\BibTeX{{\rm B\kern-.05em{\sc i\kern-.025em b}\kern-.08em
    T\kern-.1667em\lower.7ex\hbox{E}\kern-.125emX}}
\begin{document}

\title{
Distributed Optimization for Traffic Light Control and Connected Automated Vehicle Coordination in Mixed-Traffic Intersections}
\author{Viet-Anh Le$^{1,2}$, {\IEEEmembership{Student Member, IEEE}}, and Andreas A. Malikopoulos$^{3}$, {\IEEEmembership{Senior Member, IEEE}}
\thanks{This research was supported in part by NSF under Grants CNS-2149520, CMMI-2219761, IIS-2415478, and in part by Mathworks.}
\thanks{$^{1}$Department of Mechanical Engineering, University of Delaware, Newark, DE 19716 USA.}
\thanks{$^{2}$Systems Engineering Field, Cornell University, Ithaca, NY 14850 USA.}
\thanks{$^{3}$School of Civil and Environmental Engineering, Cornell University, Ithaca, NY 14853 USA.}
\thanks{Emails: {\tt\small \{vl299,amaliko\}@cornell.edu}.}
}

\maketitle
\thispagestyle{empty}
\pagestyle{empty}

\begin{abstract}

In this paper, we consider the problem of coordinating traffic light systems and connected automated vehicles (CAVs) in mixed-traffic intersections.
We aim to develop an optimization-based control framework that leverages both the coordination capabilities of CAVs at higher penetration rates and intelligent traffic management using traffic lights at lower penetration rates.
Since the resulting optimization problem is a multi-agent mixed-integer quadratic program, we propose a penalization-enhanced maximum block improvement algorithm to solve the problem in a distributed manner.
The proposed algorithm, under certain mild conditions, yields a feasible person-by-person optimal solution of the centralized problem.
The performance of the control framework and the distributed algorithm is validated through simulations across various penetration rates and traffic volumes.

\end{abstract}

\begin{IEEEkeywords}
Connected automated vehicles, traffic light control, distributed mixed-integer optimization.
\end{IEEEkeywords}

\section{Introduction}

\IEEEPARstart{T}{hough} coordination of connected automated vehicles (CAVs) has shown promise in improving various traffic metrics \cite{Malikopoulos2020,chalaki2020TCST}, achieving full CAV penetration is not expected in the near future. 
Therefore, addressing the planning and control problem for CAVs in mixed traffic, where human-driven vehicles (HDVs) coexist, remains a fundamental research area.
Our early work on that topic focused on the problems in unsignalized mixed-traffic scenarios such as merging at roadways \cite{le2024stochastic} or single-lane intersections \cite{Le2023ACC}.
However, in more complex scenarios such as multi-lane intersections, coordinating CAVs alongside multiple HDVs without traffic signals may be unrealistic.
As intelligent traffic light systems have been commonly used to exert control over HDVs at intersections, the idea of combining traffic signal control and CAVs in mixed traffic has gained growing attention in recent years \cite{li2023survey}.
The current state-of-the-art methods for coordination of traffic light systems and CAVs in mixed-traffic intersections can be classified into three main categories: (1) \emph{reinforcement learning}, (2) \emph{bi-level optimization}, and (3) \emph{joint optimization}.
Reinforcement learning \cite{guo2023cotv,song2021traffic} aims to train control policies that optimize a specific reward function. 
However, real-time safety implications are not taken into account.
Optimization-based approaches, on the other hand, are well-studied for their ability to ensure real-time safety against the uncertain behavior of HDVs. 
Bi-level optimization approaches \cite{kamal2019development,du2021coupled} separate the traffic signal optimization from the CAV trajectory optimization. 
First, the traffic signal optimization is solved using an approximate traffic model, followed by solving the CAV trajectory optimization.
In joint optimization approaches \cite{ghosh2022traffic,tajalli2021traffic,ravikumar2021mixed,firoozi2022coordination,suriyarachchi2023optimization}, optimization problems considering both traffic signal control and CAV trajectory optimization are formulated.

However, the aforementioned research efforts considered traffic light control with CAV trajectory optimization rather than fully exploiting the coordination between CAVs for crossing the intersections. 
Specifically, in prior work, the traffic lights for lanes with lateral conflicts could not be green simultaneously, even if all the vehicles were CAVs. 
This constraint may hinder traffic efficiency at high CAV penetration rates, as it has been shown that coordinating CAVs can significantly improve throughput and energy efficiency while ensuring safety without the need for traffic signals, \eg \cite{Malikopoulos2020,chalaki2020TCST}.
Furthermore, in the existing optimization-based approaches, the optimization problems are solved in a centralized manner, which may not be scalable with an increasing number of CAVs.
Therefore, in this paper, we aim to make the following contributions.
First, we formulate a joint optimization problem in which the traffic lights for the conflicting lanes can be green simultaneously if there are no lateral conflicts between an HDV and a CAV or between two HDVs.
As a result, we can better leverage CAV coordination to enhance traffic throughput. %
The resulting optimization problem is a \emph{multi-agent mixed-integer quadratic program} (MIQP) that is challenging to solve in a distributed manner. 
To address this, we propose a variant of the maximum block improvement (MBI) algorithm \cite{chen2012maximum} where the local constraints of each agent are encoded in the objective using penalty functions.
\hilitediff{We show that the algorithm can find a \emph{person-by-person optimal solution} of the multi-agent MIQP problem (equivalent to Nash equilibrium in a game-theoretic context),} \ie no agent can unilaterally improve its local objective without changing the solution of at least one neighboring agent that shares coupling constraints.
We validate the proposed framework by simulations in a microscopic traffic simulator \cite{lopez2018microscopic}.

The remainder of the article is organized as follows. 
In Section~\ref{sec:prb}, we discuss our modeling framework and formulate an MIQP for the considered problem.
In Section~\ref{sec:alg}, we present a distributed optimization algorithm to solve the MIQP problem and its theoretical results.
We show the simulation results in Section~\ref{sec:sim} and conclude the paper in Section~\ref{sec:conc}.

\section{Problem Formulation}
\label{sec:prb}

In this section, we present our modeling framework and formulate a joint optimization problem for traffic light control and CAV coordination.

\subsection{System Model}

We consider an isolated intersection illustrated in Fig.~\ref{fig:scenario}, which has separate lanes for right turns, straight-through traffic, and left turns.
We define a control zone in the vicinity of the intersection, where the vehicles and the traffic light controllers (TLCs) have communication capability and can be managed by the proposed framework.
We assume that lane changing is not allowed in the control zone.
Note that though the scenario in Fig.~\ref{fig:scenario} consists of 12 lanes, the right-turn lanes can be considered separately since they do not have any potential conflicts with other lanes, which reduces the problem to 8 lanes of interest. 
Next, we introduce several definitions for the entities involved in the intersection scenario.

\begin{definition}
(Lanes) Let \lane{l} be the $l$-th lane in the scenario and $\LLL$ be the set of all lanes' indices.
Each \lane{l} is equipped with a traffic light controller denoted by \TLC{l}.
We set each lane's origin location at the control zone's entry and let $\psi_l$ be the position of the stop line along \lane{l}.
\end{definition}

\begin{definition}
(Nodes) Let \node{n} and the notation $n=l\otimes m$ to denote that \node{n} is the conflict point between \lane{l} and \lane{m}. 
Let $\phi_l^n$ and $\phi_m^n$ be the positions of \node{n} along \lane{l} and \lane{m}, respectively.
\end{definition}

\begin{definition}
(Vehicles) Let a tuple $(i,l)$ be the index of the $i$-th vehicle traveling in \lane{l}.
For each \lane{l}, let $\CCC_l(k)$ and $\HHH_l(k)$ be set of CAVs and HDVs in \lane{l} at any time step $k$.
\end{definition}

\begin{figure}
\centering
\includegraphics[width=0.4\textwidth]{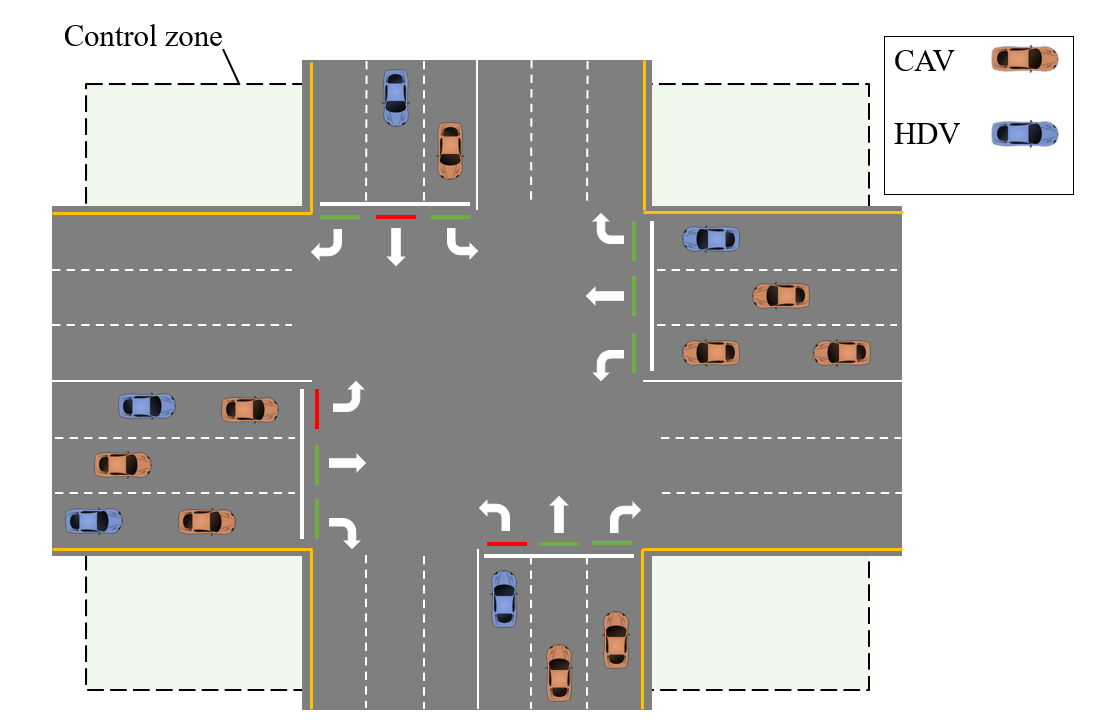}
\caption{An intersection scenario in mixed traffic with 12 lanes, including separate lanes for right turns, straight-through traffic, and left turns.}
\label{fig:scenario}
\end{figure}

\hilitediff{We formulate a joint optimization problem for TLCs and CAVs and implement it in a receding horizon framework to ensure robustness against the uncertainty caused by HDVs.}
Let $k_0$ be the current time step, $H$ be the control horizon, $\III = \{k_0+1, \dots, k_0+H\}$ be the set of time steps in the next control horizon, and $\Delta T$ be the sample time. 
In our optimization problem, we consider at most one traffic light switch for each lane over the control horizon.
\hilitediff{Let $\kappa_{l} \in \{1, \dots, H+1\}$ so that the traffic light switches at $k_0+\kappa_{l}$.
If $\kappa_l = 1$, the traffic light switches at the next time step, while if $\kappa_l = H$, the traffic light switches at the last time step of the control horizon, and $\kappa_l = H+1$ means that the traffic light does not switch over the next control horizon.} 
At each time step $k$, let $s_l(k) \in \{0, 1\}$ be the traffic light state of \lane{l}, $l \in \LLL$, where $s_l(k) = 0$ if the traffic light is red and $s_l(k) = 1$ if the traffic light is green.
The traffic light states can be modeled as follows,
\begin{equation}
\label{eq:tlmodel}
s_l (k) =
\begin{cases}
1-s_l (k_0), & \text{if } k \ge k_0 + \kappa_{l}, \\
s_l (k_0), & \text{otherwise,}
\end{cases}     
\end{equation}
for all $k \in \III$. 
In addition, we impose a constraint for the minimum and maximum time gaps between traffic light switches.
Let $\bar{k}_l$ be the time of the last traffic light switch of \lane{l}, $\delta_{\min} \in \RRplus$ and $\delta_{\max} \in \RRplus$ be the minimum and maximum time gaps, respectively. 
As a result, the next switching time must be within $[\delta_{\min}+\bar{k}_l, \delta_{\max}+\bar{k}_l]$.
Since $\kappa_l \in \{1, \dots, H+1\}$, to guarantee feasibility for $\kappa_l$, the constraints are formulated as follows, 
\begin{equation}
\label{eq:min-swt}
\begin{split}
\kappa_{l} &\ge \min \{ \delta_{\min}+\bar{k}_l-k_0, H+1\}, \\
\kappa_{l} &\le \max \{ \delta_{\max}+\bar{k}_l-k_0, 1\}.
\end{split}
\end{equation}
Note that the constraints \eqref{eq:min-swt} can be ignored if all vehicles in \lane{l} are CAVs.

\hilitediff{For each \CAV{l,i}, let $p_{l,i}(k) \in \RR$, $v_{l,i}(k) \in \RR$, and $u_{l,i}(k) \in \RR$ be the position, speed, and control input (acceleration/deceleration) at time step $k$.}
We consider the discrete double-integrator dynamics for \CAV{l,i} as follows,
\begin{equation}
\label{eq:integrator}
\hilitediff{
\begin{split}
p_{l,i}(k) &= p_{l,i}(k\!-\!1) \!+\! \Delta T v_{l,i}(k\!-\!1) \!+\! \frac{1}{2} \Delta T^2 u_{l,i}(k\!-\!1),  \\
v_{l,i}(k) &= v_{l,i}(k\!-\!1) \!+\! \Delta T u_{l,i}(k\!-\!1),
\end{split}}
\end{equation}
for all $k \in \III$. In addition, we impose the following speed and acceleration limit constraints,
\begin{equation}
\label{eq:bound-cons}
v_{\min} \le v_{l,i} (k) \le  v_{\max}, \; u_{\min} \le u_{l,i} (k\!-\!1) \le  u_{\max},
\end{equation}
for all $k \in \III$, where ${u_{\min}\in \RRminus}$ and ${u_{\max}\in \RRplus}$ are the minimum and maximum control inputs, ${v_{\min}\in \RRplus}$ and ${v_{\max}\in \RRplus}$ are the minimum and maximum speeds, respectively.

\hilitediff{In this work, we consider that the trajectories of HDVs over the control horizon are predicted using the constant acceleration model, that is, each HDV’s predicted acceleration rate remains constant over the control horizon while satisfying the speed constraint in \eqref{eq:bound-cons}.
The constant acceleration rate of each HDV at each time step is approximated by averaging past data.}
We assume that the predicted trajectories for each HDV, based on the constant acceleration model, are computed by the TLC of the lane and then transmitted to the CAVs.

\begin{remark}
Although the constant acceleration model may not precisely predict the actual behavior of HDVs, the receding horizon implementation effectively manages this discrepancy. Developing a more accurate prediction model for HDVs is beyond the scope of this paper.
\end{remark}

\subsection{Coupling constraints}

Coordinating traffic lights and CAVs requires them to satisfy several coupling constraints.
First, the traffic lights must guarantee no lateral conflicts between a CAV and an HDV or between two HDVs.
Let \lane{l} and \lane{m} be two lanes with lateral conflicts, then the lights for those lanes cannot be both green if there are HDV-CAV or HDV-HDV conflicts, \ie
\begin{equation}
\label{eq:no-conf}
s_l(k) + s_m(k) \le 1,\; \forall k \in \III, \;\text{if}\; \eta(l, m, k_0) > 0,
\end{equation}
where $\eta (l,m,k_0)$ counts the number of CAV-HDV or HDV-HDV pairs that have lateral conflicts between \lane{l} and \lane{m} at the current time step $k_0$.
If a pair of vehicles travel on two intersecting lanes and neither vehicle has yet crossed the conflict point, they have lateral conflicts.
Next, we impose a constraint ensuring that the CAVs stop at red lights,
\begin{equation}
\label{eq:red-stop}
p_{l,i} (k) \le \psi_l,\; \forall k \in \III, \; \text{if} \; s_l(k) = 0.
\end{equation}
\hilitediff{Note that we only need to impose \eqref{eq:red-stop} if \CAV{l,i} is the first vehicle in the queue of \lane{l}, \ie \CAV{l,i} has not crossed the stop line, and can stop by the line under maximum deceleration.}
To avoid rear-end collisions, we consider safety constraints for each \CAV{l,i} if there is a preceding vehicle, which can be either a CAV or an HDV, as follows
\begin{equation}
\label{eq:rearend}
p_{l,i} (k) + \tau v_{l,i} (k) + d_{\rm{min}} \le p_{l,i-1} (k),\; \forall k \in \III,
\end{equation}
where $\tau \in \RRplus$ is the desired time headway, and $d_{\rm{min}}\in \RRplus$ is the minimum distance.
Note that if \veh{l,i-1} is an HDV, then \eqref{eq:rearend} is considered as a local constraint of \CAV{l,i}, while \veh{l,i-1} is a CAV, \eqref{eq:rearend} is a coupling constraint between the two CAVs.
We also consider lateral safety constraints between two CAVs, \eg \CAV{l,i} and \CAV{m,j}, traveling on two lanes with lateral conflict node $n = l \otimes m$.
The lateral safety constraints can be formulated using OR statements \cite{alrifaee2014centralized}. %
The OR statement can be equivalently formulated using the following set of linear constraints,
\begin{equation}
\label{eq:lateral-mi}
\begin{split}
p_{l,i} (k) - \phi^n_{l} & \ge d_{\rm{min}} - M (1 - c_{l,i}), \\
\phi^n_{l} - p_{l,i} (k) & \ge d_{\rm{min}} - M (1 - e_{l,i}), \\
p_{m,j} (k) -\phi^n_{m} & \ge d_{\rm{min}} - M (1 - c_{m,j}), \\
\phi^n_{m} - p_{m} (k) & \ge d_{\rm{min}} - M (1 - e_{m,j}), 
\end{split}
\end{equation}
and 
\begin{equation}
\label{eq:lateral-1}
c_{l,i} + e_{l,i} + c_{m,j} + e_{m,j} \ge 1,
\end{equation}
for all $k \in \III$, where $c_{l,i}, e_{l,i} \in \{0, 1\}$, and $M \in \RRplus$ is a large positive number. 
\hilitediff{The constraint \eqref{eq:lateral-1} implies that at least a binary among $c_{l,i}$, $e_{l,i}$, $c_{m,j}$, and $e_{m,j}$ must be $1$, which, when combined with \eqref{eq:lateral-mi}, guarantees the distance of at least one vehicle to the conflict point is greater than or equal to $d_{\rm{min}}$.}
Note that all the binary variables $c_{l,i}$ and $e_{l,i}$, $\forall i \in \CCC_l(k_0)$  are handled by \TLC{l}. %

\subsection{Objective Function}

In our formulation, each agent has its separate objective.
For the TLCs, the goal is to maximize the traffic throughput by encouraging the traffic lights for the lanes with higher priority to be green. 
Let $\gamma_{l}$ be the priority coefficient of \lane{l}, which depends on the number of vehicles upstream of the stop line.
Moreover, to put a higher priority on the vehicles near the intersection, we use the sigmoid function that takes the vehicles' current positions as inputs, leading to the following computation at the current time step $k_0$,
\begin{equation}
\gamma_{l} = \sum_{ \substack{i \in \CCC_l(k_0) \cup \HHH_l(k_0), \\ p_{l,i} (k_0) < \psi_l} } \mathrm{sigmoid} \bigg( \frac{p_{l,i} (k_0) - \psi_l/2}{\psi_l/2} \bigg).
\end{equation}
For the CAVs, we optimize the trajectories over the control horizon by a weighted sum of two terms: (1) maximization of the positions to improve the travel time and (2) minimization of the acceleration rates for energy savings.
Thus, our optimization problem minimizes the following objective,
\hilitediff{
\begin{equation}
\label{eq:obj}
\begin{multlined}
\underset{\substack{\bbsym{s}_{l}, \bbsym{u}_{l,i} \\ \forall l \in \LLL, i \in \CCC_l(k_0)}}{\minimize} \quad
\sum_{k \in \III} \sum_{l \in \LLL} \Bigg( - \gamma_{l} \, s_l(k+1) \\ 
+ \sum_{i \in \CCC_l(k_0)} - \omega_p p_{l,i} (k+1)  + \omega_u u^2_{l,i} (k) \Bigg),
\end{multlined}
\end{equation}
where $\omega_p \in \RRplus$ and $\omega_u \in \RRplus$ are the positive weights, while we let $\bbsym{s}_{l} = [s_{l}(k)]_{k \in \III}$, $\bbsym{u}_{l,i} = [u_{l,i}(k)]_{k \in \III}$ be the vectors of decision variables for \TLC{l} and \CAV{l,i} over the control horizon, respectively.
}

In our problem formulation, the objective function is quadratic, while all the constraints are linear, which results in an MIQP problem.

\section{Distributed Mixed-Integer Optimization Algorithm}
\label{sec:alg}

In this section, we present the distributed algorithm to solve the MIQP problem at every time step inspired by the MBI method \cite{li2015convergence}.
It is worth noting that there is a limited amount of research on distributed algorithms for MIQPs in the existing literature, \eg \cite{sun2018distributed,liu2021distributed,klostermeier2024numerical}.

\subsection{Distributed Mixed-Integer Optimization Problem}

To facilitate the exposition of the algorithm, we rewrite the optimization problem in the previous section in a compact form.
First, we describe the communication between the agents in our multi-agent optimization problem as follows. 
\begin{definition}
Let $N$ be the total number of agents, $\VVV = \{1, \dots, N\}$ be the set of agents, and $\EEE \subset \VVV \times \VVV$ be the set of communication links.
For each \agent{i}, $i \in \VVV$, let $\NNN_i = \{j \in \VVV \;|\; (i,j) \in \EEE\}$ be the set of its neighbors.
\end{definition}

\hilitediff{Note that for our problem, the agents include TLCs and CAVs, and a pair of agents must have a communication link if they share coupling constraints.}
The problem at every time step can be formulated as a mixed-integer optimization problem with separable objectives and pair-wise coupling constraints as follows,
\begin{subequations}
\label{eq:compact-prb}
\begin{align}
\underset{\bbsym{x}_i \in \XXX_i, \forall i \in \VVV}{\minimize} & \quad \sum_{i \in \VVV} f_i (\bbsym{x}_i), \label{eq:compact-prb-obj} \\
\text{subject to} & \quad  \bbsym{A}_{i} \bbsym{x}_i \le \bbsym{b}_{i}, \, i \in \VVV, \label{eq:compact-prb-loc} \\
& \quad (\bbsym{C}^{j}_{i})^\top \bbsym{x}_i + (\bbsym{C}^{i}_{j})^\top \bbsym{x}_{j} \le \bbsym{d}_{ij}, \, (i,j) \in \EEE, \label{eq:compact-prb-cpl}
\end{align}
\end{subequations}
where 
$\XXX_i$ is the domain set for $\bbsym{x}_i$,
$f_i (\bbsym{x}_i) = \bbsym{x}_i^\top \bbsym{Q}_i \bbsym{x}_i + \bbsym{q}_i^\top \bbsym{x}_i$ is the local objective function, $\bbsym{Q}_i$, $\bbsym{q}_i$, $\bbsym{A}_{i}$, $\bbsym{b}_{i}$, $\bbsym{C}^{j}_{i}$, $\bbsym{C}^{i}_{j}$, and $\bbsym{d}_{ij}$ are the matrices and vectors of coefficients.
The constraint \eqref{eq:compact-prb-loc} collects all the (linear) local constraints of \agent{i}, while the constraint \eqref{eq:compact-prb-cpl} collects all the (linear) coupling constraints between \agent{i} and \agent{j}. 
Next, we impose the following assumption for the feasibility of the optimization problem \eqref{eq:compact-prb}.
\begin{assumption}
\label{assp:feasibility}
The problem \eqref{eq:compact-prb} admits at least a feasible solution.
\end{assumption}

\subsection{Penalization-Enhanced Maximum Block Improvement Algorithm}

In the MBI algorithm, each agent optimizes its local variables at each iteration while keeping all other agents' variables fixed.
Only the agent that achieves the greatest cost reduction among its neighbors can update its local variables.
However, the MBI algorithm requires an initial feasible solution, which is not always trivial for our problem. %
To overcome this, we propose to use the maximum penalty function to relax the local inequality constraints so that %
finding an initial feasible solution that satisfies the coupling constraints becomes trivial.
For example, if all the local constraints are relaxed, a trivial solution that satisfies all the coupling constraints in our problem can be found by setting $s_{l} (k+1) = 0$ and $v_{l,i} (k+1) = 0$, $\forall k \in \III$, \ie all the traffic lights are red, and all the CAVs stop over the control horizon.
We prove that if the penalty weights are chosen appropriately, the penalized problem can yield the same solution as the original problem.
Let $\rho \in \RRplus$ be the penalty weight for the local constraint relaxation.
The penalized version of \eqref{eq:compact-prb} is given by
\begin{subequations}
\label{eq:pen_prb}
\begin{align}
\underset{\bbsym{x}_i \in \XXX_i, \forall i \in \VVV}{\minimize} & \quad \sum_{i \in \VVV} \tilde{f}_i (\bbsym{x}_i), \\
\text{subject to} & \quad (\bbsym{C}^{j}_{i})^\top \bbsym{x}_i + (\bbsym{C}^{i}_{j})^\top \bbsym{x}_{j} \le \bbsym{d}_{ij}, \, (i,j) \in \EEE,
\end{align}
\end{subequations}
where 
\begin{equation}
\tilde{f}_i (\bbsym{x}_i) = f_i (\bbsym{x}_i) + \rho. \bbsym{1}^\top \max(\bbsym{0}, \bbsym{A}_{i} \bbsym{x}_i - \bbsym{b}_{i}).
\end{equation}

The MBI algorithm for solving \eqref{eq:pen_prb} consists of the following steps.
The algorithm starts with an initial feasible solution $\{\bbsym{x}_i^{(0)}\}_{i \in \VVV}$.
Then, at each iteration $t$, each \agent{i} transmits $(\bbsym{C}^{i}_{j})^\top \bbsym{x}_i^{(t)}$ to \agent{j}, $\forall j \in \NNN_i$. 
Next, in parallel, \agent{i} solves the following local sub-problem given fixing the other agents' variables to obtain a new candidate solution $\hat{\bbsym{x}}_i^{(t+1)}$,
\begin{subequations}
\label{eq:loc-subprb}
\begin{align}
\hat{\bbsym{x}}_i^{(t+1)} =
\underset{\bbsym{x}_i \in \XXX_i}{\argmin} & \quad \tilde{f}_i (\bbsym{x}_i), \\
\text{subject to} & \quad (\bbsym{C}^{j}_{i})^\top \bbsym{x}_i \le \bbsym{d}_{ij} - (\bbsym{C}^{i}_{j})^\top \bbsym{x}_{j}^{(t)},
\end{align}
\end{subequations}
and compute the local cost reduction $\Delta f_i^{(t+1)} = \tilde{f}_i(\hat{\bbsym{x}}_i^{(t+1)}) - \tilde{f}_i(\bbsym{x}_i^{(t)})$ for the new candidate solution.
Then \agent{i} transmits $\Delta f_i^{(t+1)}$ to \agent{j}, $\forall j \in \NNN_i$.
\Agent{i} accepts or rejects the new solution by comparing its local cost reduction with its neighbors, \ie $\bbsym{x}_i^{(t+1)} = \hat{\bbsym{x}}_i^{(t+1)}$ if $\Delta f_i^{(t+1)} > \Delta f_j^{(t+1)}, \; \forall j \in \NNN_i$, otherwise $\bbsym{x}_i^{(t+1)} = \bbsym{x}_i^{(t)}$.
The algorithm terminates if no agent can further improve its local variables.
As the TLCs handle all the integer variables, the sub-problems for CAVs are quadratic programs (QPs), while those for TLCs are linear integer programs (LIPs).
In what follows, we analyze the theoretical properties of the proposed algorithm.

\begin{lemma}
\label{lem:1}
\hilitediff{If the non-penalized sub-problem \eqref{eq:pen_prb} for each CAV is feasible and the penalty weight $\rho$ is chosen sufficiently large, the solution of the penalized sub-problem is also the solution of the non-penalized sub-problem.}
\end{lemma}

\begin{proof}
For CAVs, the sub-problems are QPs, thus this lemma is a special case of Proposition 5.25 in \cite{bertsekas2014constrained}. 
\end{proof}

\begin{lemma}
\label{lem:2}
\hilitediff{If the non-penalized sub-problem \eqref{eq:pen_prb} for each TLC is feasible and the penalty weight $\rho$ is chosen sufficiently large, the solution of the penalized sub-problem is also the solution of the non-penalized sub-problem.}
\end{lemma}
  
\begin{proof}
We prove this lemma based on the fact that the domain set $\XXX_i$ for the integer variables is finite.
Let $\tilde{\bbsym{x}}_i^*$ be the minimizer of the penalized sub-problem. 
If $\tilde{\bbsym{x}}_i^*$ satisfies all the local constraints, then obviously $\tilde{\bbsym{x}}_i^*$ is the feasible minimizer of the non-penalized sub-problem.
Therefore, we only need to consider if $\tilde{\bbsym{x}}_i^*$ violates at least a local constraint.
Let $\bbsym{x}_i^*$ be the minimizer of the non-penalized sub-problem. 
From the definition of $\tilde{\bbsym{x}}_i^*$, we have
\begin{equation}
\label{eq:lem2-1}
f_i (\tilde{\bbsym{x}}_i^*) + \rho. \bbsym{1}^\top \max(0, \bbsym{A}_{i} \tilde{\bbsym{x}}_i^* - \bbsym{b}_{i}) \le f_i (\bbsym{x}_i^*),
\end{equation}
since $\max(\bbsym{0}, \bbsym{A}_{i} \bbsym{x}_i^* - \bbsym{b}_{i}) = \bbsym{0}$.
As $\tilde{\bbsym{x}}_i^*$ violates at least a local constraint, there exists $\epsilon > 0$ such that $\bbsym{1}^\top \max(\bbsym{0}, \bbsym{A}_{i} \tilde{\bbsym{x}}_i^* - \bbsym{b}_{i}) > \epsilon$.
Combining with \eqref{eq:lem2-1}, we obtain
\begin{equation}
\label{eq:lem2-2}
\rho \epsilon \le  f_i (\bbsym{x}_i^*) - f_i (\tilde{\bbsym{x}}_i^*).
\end{equation}
Since $\XXX_i$ is a finite set and $f_i$ is a linear function, $f_i$ is bounded over $\XXX_i$.
In other words, there exist $0 < \xi < +\infty$ such that $f_i (\bbsym{x}_i^*) - f_i (\tilde{\bbsym{x}}_i^*) \le \xi$. 
If we choose $\rho > \xi/\epsilon$, it contradicts with \eqref{eq:lem2-2}.
Therefore, if $\rho$ is sufficiently large, $\tilde{\bbsym{x}}_i^*$ must be local-constraint feasible for the non-penalized sub-problem. 
\end{proof}

\hilitediff{In the next definition, we extend the concept of person-by-person optimality \cite{bauso2008generalized} to constrained multi-agent optimization.
\begin{definition}
\label{def:pbp}
(Person-by-person optimality) $\{\bbsym{x}_i^*\}_{i \in \VVV}$ is a person-by-person optimal solution of \eqref{eq:compact-prb} if for any $i \in \VVV$,
\begin{equation}
\begin{split}    
\bbsym{x}_i^* = \;\underset{\bbsym{x}_i \in \XXX_i}{\argmin} & \; f_i (\bbsym{x}_i), \\
\text{subject to} & \;  \bbsym{A}_{i} \bbsym{x}_i \le \bbsym{b}_{i}, \\
& \; (\bbsym{C}^{j}_{i})^\top \bbsym{x}_i + (\bbsym{C}^{i}_{j})^\top \bbsym{x}_{j}^* \le \bbsym{d}_{ij}, \, \forall j \in \NNN_i.
\end{split}
\end{equation}
and we refer to $\bbsym{x}_i^*$ as the best feasible response to $\{\bbsym{x}_{j}^*\}_{j \in \NNN_i}$.
\end{definition}}

\hilitediff{Note that the person-by-person optimality condition for the penalized problem \eqref{eq:pen_prb} follows similarly to Definition~\ref{def:pbp} and is therefore omitted.}

\begin{lemma}
\label{lem:3}
Starting from an initial solution $\{\bbsym{x}_i^{(0)}\}_{i \in \VVV}$ that satisfies the coupling constraints, the sequence $\{\bbsym{x}_i^{(t)}\}_{i \in \VVV}$ generated by the algorithm at any iteration satisfies the coupling constraints.
In addition, if the generated sequence has an \hilitediff{accumulation point}, it is a person-by-person optimal solution of the penalized problem \eqref{eq:pen_prb}.
\end{lemma}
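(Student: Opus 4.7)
The plan is to establish the two claims separately, treating coupling-constraint preservation as an invariant maintained by the update rule, and then arguing person-by-person optimality at any limit point via a monotone-decrease-plus-contradiction argument.

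For the first claim, I would proceed by induction on the iteration index $t$. The base case is exactly the hypothesis on $\{\bbsym{x}_i^{(0)}\}_{i\in\VVV}$. For the inductive step, suppose $\{\bbsym{x}_i^{(t)}\}_{i\in\VVV}$ satisfies \eqref{eq:compact-prb-cpl}. Fix any edge $(i,j)\in\EEE$. The acceptance rule uses the strict inequality $\Delta f_i^{(t+1)} > \Delta f_k^{(t+1)}$ for every $k\in\NNN_i$, so since $j\in\NNN_i$ and $i\in\NNN_j$, it is impossible for both $i$ and $j$ to accept simultaneously. Therefore exactly one of three cases occurs: (a) neither agent updates, so the coupling constraint on $(i,j)$ is unchanged; (b) agent $i$ updates to $\hat{\bbsym{x}}_i^{(t+1)}$ while $\bbsym{x}_j^{(t+1)}=\bbsym{x}_j^{(t)}$, in which case the local sub-problem constraint in \eqref{eq:loc-subprb} guarantees $(\bbsym{C}^{j}_{i})^\top\hat{\bbsym{x}}_i^{(t+1)} + (\bbsym{C}^{i}_{j})^\top\bbsym{x}_j^{(t)}\le \bbsym{d}_{ij}$; (c) the symmetric case with the roles of $i$ and $j$ swapped. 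In every case the coupling constraint \eqref{eq:compact-prb-cpl} remains satisfied, completing the induction.

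For the second claim, I would first note that the penalized objective $\sum_i \tilde f_i(\bbsym{x}_i^{(t)})$ is monotonically non-increasing: only agents with $\Delta f_i^{(t+1)}<0$ can change, so each accepted update strictly decreases the total cost, and all other agents contribute zero change. Let $\{\bbsym{x}_i^*\}_{i\in\VVV}$ be a limit point; by Part~1 it is coupling-constraint feasible. Suppose for contradiction that it is not person-by-person optimal, so some agent $i^\star$ admits a sub-problem solution $\hat{\bbsym{x}}_{i^\star}$ given its neighbors' values $\{\bbsym{x}_j^*\}_{j\in\NNN_{i^\star}}$ with $\tilde f_{i^\star}(\hat{\bbsym{x}}_{i^\star}) - \tilde f_{i^\star}(\bbsym{x}_{i^\star}^*) = -\delta$ for some $\delta>0$. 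I would then pass to a subsequence along which $\bbsym{x}_i^{(t_k)}\to\bbsym{x}_i^*$, and use (lower semi-)continuity of the parametric sub-problem value with respect to the right-hand side $\bbsym{d}_{ij}-(\bbsym{C}^{i}_{j})^\top\bbsym{x}_j^{(t_k)}$ to conclude that $\Delta f_{i^\star}^{(t_k+1)} \le -\delta/2$ for all sufficiently large $k$. Hence the neighbor of $i^\star$ with maximal improvement (which is either $i^\star$ itself or some $j\in\NNN_{i^\star}$ with even larger reduction) must accept, yielding a cost decrease bounded away from zero infinitely often. This contradicts the convergence of the monotone cost sequence to the value at $\{\bbsym{x}_i^*\}$, establishing person-by-person optimality.

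The principal obstacle is the continuity step in Part~2: the sub-problems for the TLCs are linear integer programs whose optimal value is not continuous in the constraint data, so one must justify carefully why a persistent improvement of size $\delta$ at the limit translates to a persistent improvement of comparable size along the subsequence. I would handle this by exploiting the finiteness of the integer domain $\XXX_i$ for each TLC, noting that the candidate improving point $\hat{\bbsym{x}}_{i^\star}$ can be held fixed along the subsequence and remains feasible for the perturbed coupling constraints once the neighbors are close enough to their limit values, which gives the required uniform lower bound on the achievable improvement. A secondary subtlety is ruling out deadlock from exact ties in $\Delta f_i^{(t+1)} = \Delta f_j^{(t+1)}$: in the contradiction argument this is not an issue, because it suffices that \emph{some} agent in the local neighborhood updates, and the largest-improvement agent along the subsequence will be unique generically or can be taken as such after an arbitrarily small tie-breaking perturbation.
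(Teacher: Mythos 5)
Your Part 1 is the paper's argument almost verbatim: the strict max-improvement acceptance rule prevents two neighboring agents from updating in the same iteration, so on each edge $(i,j)\in\EEE$ the new pair is one of $(\bbsym{x}_i^{(t)},\bbsym{x}_j^{(t)})$, $(\hat{\bbsym{x}}_i^{(t+1)},\bbsym{x}_j^{(t)})$, or $(\bbsym{x}_i^{(t)},\hat{\bbsym{x}}_j^{(t+1)})$, each satisfying \eqref{eq:compact-prb-cpl} by the induction hypothesis or by the constraint of \eqref{eq:loc-subprb}. For Part 2 you argue by contradiction (a persistent improvement of size $\delta$ at the limit forces the monotone cost to drop by at least $\delta/2$ infinitely often), whereas the paper argues directly: it takes the best response $\bar{\bbsym{x}}_i^*$ to the limiting neighbor values, bounds $\tilde{f}_i(\bar{\bbsym{x}}_i^*)$ plus an indicator of coupling feasibility below by $\tilde{f}_i(\bbsym{x}_i^{(t_s+1)})$, invokes monotonicity of $\tilde{f}_i(\bbsym{x}_i^{(t)})$, and passes to the limit to conclude $\tilde{f}_i(\bar{\bbsym{x}}_i^*)\ge\tilde{f}_i(\bar{\bbsym{x}}_i)$. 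The two arguments are contrapositives of one another and hinge on the same delicate step: transferring feasibility of an improving candidate between the limit $\bar{\bbsym{x}}_{\NNN_i}$ and the iterates $\bbsym{x}^{(t_s)}_{\NNN_i}$. You isolate this step explicitly and propose to fix it by holding the candidate constant and using finiteness of the integer domains; the paper instead buries it in the claim that the indicator term vanishes in the limit. Be aware that your fix (like the paper's) is still fragile when the coupling constraint is active at the limit point: a candidate feasible for $\bar{\bbsym{x}}_{\NNN_i}$ with equality may be infeasible for every nearby $\bbsym{x}^{(t_s)}_{\NNN_i}$, so "close enough to the limit" does not by itself restore feasibility, and a fully rigorous version needs either a strict-feasibility (Slater-type) qualification on the improving candidate or a perturbation of $\delta$. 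Your remark about exact ties stalling the strict acceptance rule is a genuine, if minor, point the paper does not address; a deterministic tie-break resolves it. Overall your proposal matches the paper's proof in substance and is more candid about where the argument is fragile.
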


\begin{proof}
First, we can observe that the feasibility of each local sub-problem \eqref{eq:loc-subprb} is always maintained since $\bbsym{x}_i^{(t)}$ itself is feasible. 
We prove that if $\{\bbsym{x}_i^{(t)}\}_{i \in \VVV}$ is coupling-constraint feasible for \eqref{eq:compact-prb}, so is $\{\bbsym{x}_i^{(t+1)}\}_{i \in \VVV}$ .
Note that if \agent{i} and \agent{j} have coupling constraints, then at each iteration, at most, one of them can accept its solution.
That means %
$(\bbsym{x}_i^{(t+1)}, \bbsym{x}_j^{(t+1)})$ can be either $(\bbsym{x}_i^{(t)}, \bbsym{x}_j^{(t)})$, $(\hat{\bbsym{x}}_i^{(t+1)}, \bbsym{x}_j^{(t)})$, or $(\bbsym{x}_i^{(t)}, \hat{\bbsym{x}}_j^{(t+1)})$, which all satisfy the coupling constraints. 
Therefore, all the coupling constraints are satisfied by $\{\bbsym{x}_i^{(t+1)}\}_{i \in \VVV}$.
Let $\{\bar{\bbsym{x}}_i\}_{i \in \VVV}$ be a \hilitediff{accumulation point} of the sequence $\{\bbsym{x}_i^{(t)}\}_{i \in \VVV}$ in the sense that for each $i \in \VVV$ there exists a subsequence $\{\bbsym{x}_i^{(t_s)}\}$ that converges to $\bar{\bbsym{x}}_i$. 
Note that since $\XXX_i$ is closed, $\forall i \in \VVV$, $\bar{\bbsym{x}}_i \in \XXX_i$.
Moreover, as the coupling constraints generate a closed polyhedron and the sequence $\{\bbsym{x}_i^{(t)}\}_{i \in \VVV}$ is coupling-constraint feasible for all $t$, $\{\bar{\bbsym{x}}_i\}_{i \in \VVV}$ is also coupling-constraint feasible.

For any $i \in \VVV$, let $\bar{\bbsym{x}}_i^*$ be the best feasible response to $\{\bar{\bbsym{x}}_{j}\}_{j \in \NNN_i}$, \ie
\begin{equation}
\label{eq:lem3-1}
\bar{\bbsym{x}}_i^* = \underset{\bbsym{x}_i \in \XXX_i}{\argmin} \quad \tilde{f}_i (\bbsym{x}_i) + \II_{cc} (\bbsym{x}_i, \{\bar{\bbsym{x}}_{j}\}_{j \in \NNN_i}),
\end{equation}
where to simplify the proof, we utilize $\II_{cc} (\bbsym{x}_i, \{\bar{\bbsym{x}}_{j}\}_{j \in \NNN_i})$ as the indicator function that take $+\infty$ if the coupling constraint is violated.
We have
\begin{equation}
\label{eq:lem3-2}
\tilde{f}_i (\bar{\bbsym{x}}_i^*) + \II_{cc} (\bar{\bbsym{x}}_i^*, \{\bbsym{x}_j^{(t_s)}\}_{j \in \NNN_i})
\ge \tilde{f}_i (\bbsym{x}_i^{(t_s+1)}),
\end{equation}
since $\bbsym{x}_i^{(t_s+1)}$ is the best feasible response to $\{\bbsym{x}_j^{(t_s)}\}_{j \in \NNN_i}$.
Moreover, as $\tilde{f}_i (\bbsym{x}_i^{t})$ is a non-increasing sequence, from \eqref{eq:lem3-2} we obtain
\begin{equation}
\tilde{f}_i (\bar{\bbsym{x}}_i^*) + \II_{cc} (\bar{\bbsym{x}}_i^*, \{\bbsym{x}_j^{(t_s)}\}_{j \in \NNN_i}) \ge \tilde{f}_i (\bbsym{x}_i^{(t_{s+1})}).
\end{equation}
If $t \rightarrow \infty$ it yields
\begin{equation}
\label{eq:lem3-3}
\tilde{f}_i (\bar{\bbsym{x}}_i^*)  = \tilde{f}_i (\bar{\bbsym{x}}_i^*) + \II_{cc} (\bar{\bbsym{x}}_i^*, \{\bar{\bbsym{x}}_j^*\}_{j \in \NNN_i}) \ge \tilde{f}_i (\bar{\bbsym{x}}_i).
\end{equation}
Since $\bar{\bbsym{x}}_i$ is also a feasible response to $\{\bar{\bbsym{x}}_{j}\}_{j \in \NNN_i}$, \eqref{eq:lem3-1} and \eqref{eq:lem3-3} implies that $\tilde{f}_i (\bar{\bbsym{x}}_i^*) = \tilde{f}_i (\bar{\bbsym{x}}_i)$, or $\bar{\bbsym{x}}_i$ is the best feasible response to $\{\bar{\bbsym{x}}_{j}\}_{j \in \NNN_i}$.
\end{proof}

Combining Lemmas~\ref{lem:1}, \ref{lem:2} and \ref{lem:3}, we obtain the following main theorem. 
\begin{theorem}
\hilitediff{The accumulation point} generated from the penalization-enhanced MBI algorithm is a person-by-person optimal solution of \eqref{eq:compact-prb}.
\end{theorem}

\section{Simulation Results}
\label{sec:sim}

In this section, we demonstrate the control performance of the proposed framework by numerical simulations.

\subsection{Simulation Setup}

\hilitediff{We validated our framework using a mixed-traffic simulation environment in \texttt{SUMO} interfacing with Julia programming language via \texttt{TraCI} \cite{lopez2018microscopic} and \texttt{PyCall} package.}
In the simulation, we considered an intersection with a control zone of length \SI{250}{m} for each lane, with \SI{150}{m} upstream and \SI{100}{m} downstream of the stop line position.
We conducted multiple simulations for three traffic volumes: $1200$, $1400$, and $1600$ vehicles per hour and six different penetration rates: $0\%$, $20\%$, $40\%$, $60\%$, $80\%$, and $100\%$.
The vehicles are randomly assigned to the lanes, with higher entry rates for left turns and straight-through traffic, while the entry rates across four directions are balanced.
In our implementation, \texttt{GUROBI} optimizer \cite{gurobi} is used to solve the QP and LIP sub-problems.

\subsection{Results and Discussions}

Videos and data of the simulations can be found at \url{https://sites.google.com/cornell.edu/tlc-cav}.
The simulations demonstrate that the proposed framework effectively coordinates TLCs and CAVs across varying traffic volumes and penetration rates.
The framework allows CAVs to coordinate their intersection crossings while mitigating full stops at traffic lights, especially at high penetration rates. 
To assess the framework's performance extensively, we computed the average travel time and average acceleration rates from the simulation data, with a duration of $1800$ seconds for each simulation.
The results are shown in Fig.~\ref{fig:metr}.
Overall, in all three examined traffic volumes, starting from a penetration rate of $60\%$, the framework achieves remarkable travel time improvement compared to the scenario with pure HDVs.
The framework also performs well at lower penetration rates ($20\%$ and $40\%$), and differences compared to the pure HDV case remain relatively small.
Moreover, we can observe that CAV penetration generally leads to lower average acceleration rates, which may result in better energy consumption and a more comfortable travel experience. 

To demonstrate the benefits of leveraging CAV coordination for intersection crossing, we compare the average travel time between using the proposed control approach ($\#1$) and an alternative approach ($\#2$) where the lateral conflicts between CAVs are handled by the traffic lights instead of by the lateral constraints \eqref{eq:lateral-mi} and \eqref{eq:lateral-1}.
The statistics are computed from simulations conducted with the high traffic volume of $1600$ vehicles per hour and are shown in Fig.~\ref{fig:baseline} .  
As can be seen from the figure, utilizing CAV coordination, approach $\#1$ outperforms in terms of travel time the approach with only intelligent traffic light control across all penetration rates. 
The improvement is especially significant at lower penetration rates, highlighting the advantages of combining CAV coordination with traffic light control for conflict resolution compared to relying solely on traffic lights.
Approach $\#1$ may require higher acceleration rates than approach $\#2$, but it can be explained by the fact that the vehicles experience more full stops in simulations using approach $\#2$.

\hilitediff{Finally, we compare the average computation time between our proposed distributed algorithm and the centralized optimization approach employing the \texttt{GUROBI} solver to solve the MIQP problem. 
Table~\ref{tab:time} presents the average computation time across different numbers of agents for the two approaches.
Overall, our distributed algorithm required less time than the centralized optimization approach, validating the benefits of distributed computation.
However, the results reveal that the computation time for the distributed algorithm still increases with the number of agents. 
This is because as the number of agents grows, the algorithm may require more iterations to achieve convergence.}

\begin{figure}[!tb]
\centering
\includegraphics[scale=0.28]{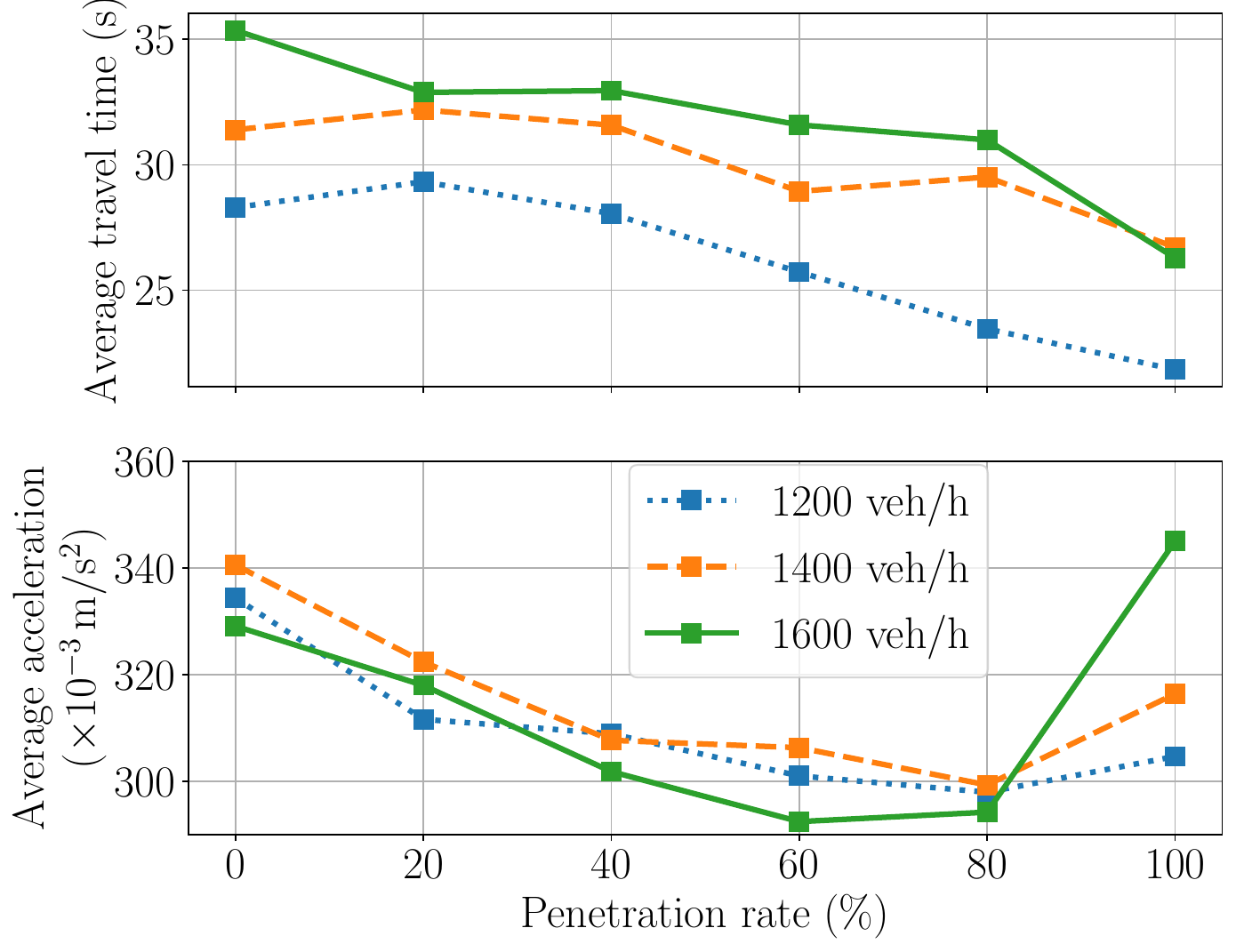}
\caption{Average travel time (top) and average acceleration (bottom) under different penetration rates and traffic volumes.}
\label{fig:metr}
\end{figure}
\begin{figure}
\centering
\includegraphics[scale=0.28]{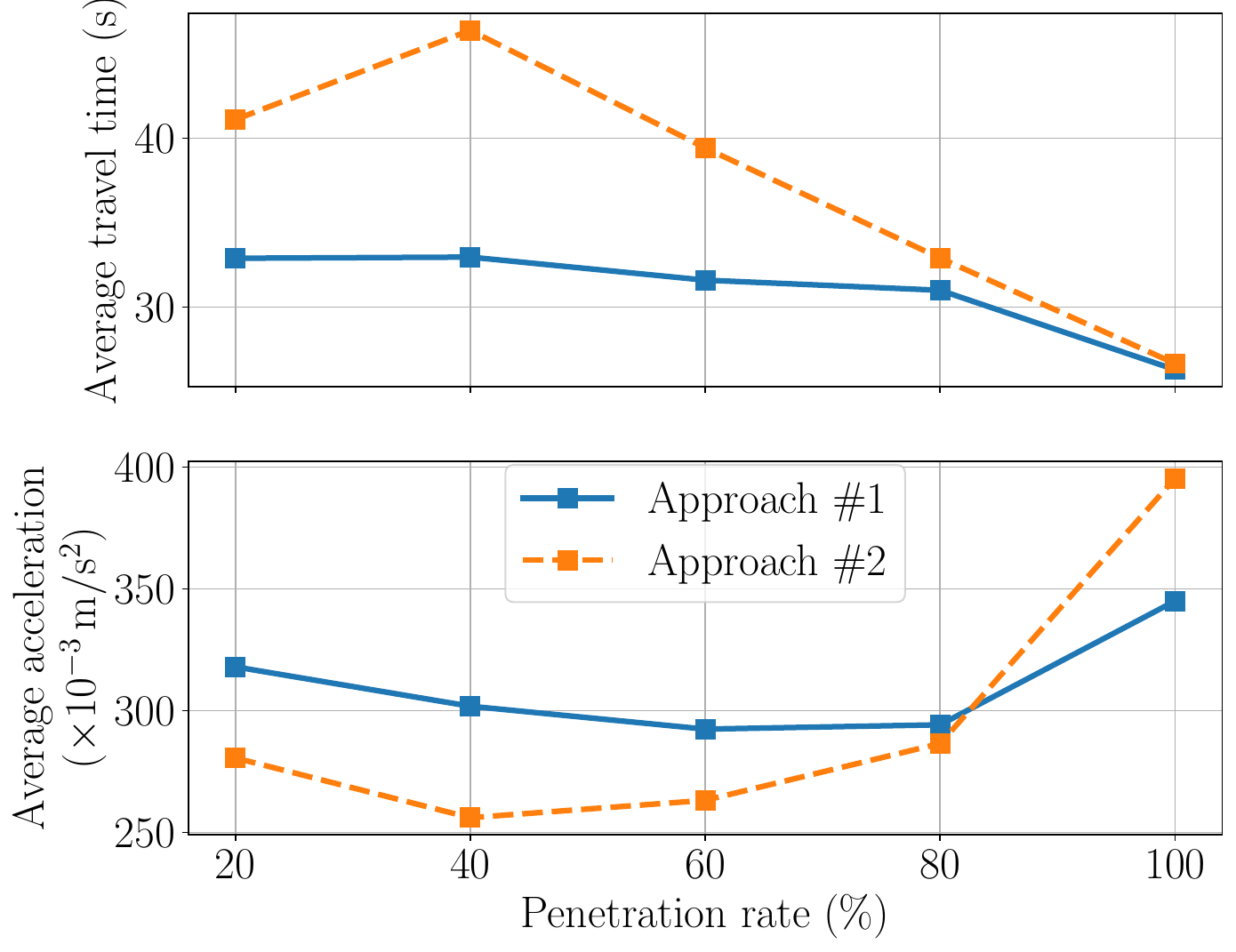}
\caption{Comparison of average travel time (top) and average acceleration (bottom) between two approaches with ($\#1$) and without ($\#2$) CAV coordination for lateral conflicts.}
\label{fig:baseline}
\end{figure}

\begin{table}[!t]

\centering
\caption{\hilitediff{Average computation time (in mili-seconds) for the centralized optimization using \texttt{GUROBI} solver and the proposed distributed optimization algorithm across different number of agents.}}
\label{tab:time}
\hilitediff{
\begin{tabular}{ K{1.2cm} | K{1.0cm} | K{1.0cm} | K{1.0cm} | K{1.0cm} | K{1.0cm}  }
\toprule[1.0pt]
& {$N = 10$} & {$N = 15$} & {$N = 20$} & {$N = 25$} & {$N = 30$} \\ 
\midrule[0.5pt]
Distributed
& $\bf{12.3}$ 
& $\bf{16.6}$ 
& $\bf{23.3}$ 
& $\bf{33.6}$ 
& $\bf{53.2}$ \\
Centralized
& ${13.3}$ 
& ${36.5}$ 
& ${76.8}$ 
& ${126.7}$  
& ${167.1}$ \\
\bottomrule[1.0pt]
\end{tabular}
}
\end{table}

\section{Conclusions}
\label{sec:conc}

In this paper, we addressed the optimal coordination of traffic lights and CAVs in mixed-traffic intersections by formulating an MIQP problem.
The formulation improves upon the previous ones by allowing the traffic lights of two lanes with lateral conflicts to be green if no conflict involving an HDV, enabling better CAV coordination at higher penetration rates.
We also propose the penalization-enhanced MBI algorithm to find a feasible person-by-person optimal solution in a distributed manner.
Future work should focus on enhancing HDV trajectory prediction and developing distributed algorithms that do not require an initial feasible solution.

\bibliographystyle{IEEEtran}
\bibliography{IEEEabrv,bib/refs_opt,bib/refs_cav,bib/refs_mt,bib/refs_IDS,bib/refs_others}

\balance

\end{document}